\newcommand{\ket}[1]{\left\vert{#1}\right\rangle}
\newcommand{\qw}[1][-1]{\ar @{-} [0,#1]}
\newcommand{\qwx}[1][-1]{\ar @{-} [#1,0]}
\newcommand{\gate}[1]{*{\xy *+<.6em>{#1};p\save+LU;+RU **\dir{-}\restore\save+RU;+RD **\dir{-}\restore\save+RD;+LD **\dir{-}\restore\POS+LD;+LU **\dir{-}\endxy} \qw}
\newcommand{\meter}{\gate{\xy *!<0em,1.1em>h\cir<1.1em>{ur_dr},!U-<0em,.4em>;p+<.5em,.9em> **h\dir{-} \POS <-.6em,.4em> *{},<.6em,-.4em> *{} \endxy}}
\newcommand{\control}{*-=-{\bullet}}
\newcommand{\ctrl}[1]{\control \qwx[#1] \qw}
\newcommand{\rstick}[1]{*!L!<-.5em,0em>=<0em>{#1}}
\newcommand{\lstick}[1]{*!R!<.5em,0em>=<0em>{#1}}
\newcommand{\Qcircuit}{\xymatrix @*=<0em>}
\theoremstyle{plain}
\newtheorem{theorem}{Theorem}
\theoremstyle{definition}
\newtheorem{definition}{Definition}
\begin{document}
\title{Optimal Blind Quantum Computation}
\author{Atul Mantri}
\affiliation{Indian Institute of Science Education and Research (IISER) Mohali,
SAS Nagar, Sector 81, Mohali 140 306, Punjab, India} 
\affiliation{Centre for Quantum Technologies, National University of Singapore, 3 Science Drive 2, Singapore 117543}
\author{Carlos A. P\'erez-Delgado}\email{cqtcapd@nus.edu.sg}
\affiliation{Centre for Quantum Technologies, National University of Singapore, 3 Science Drive 2, Singapore 117543}
\author{Joseph F. Fitzsimons}\email{joe.fitzsimons@nus.edu.sg}
\affiliation{Centre for Quantum Technologies, National University of Singapore, 3 Science Drive 2, Singapore 117543}
\affiliation{Singapore University of Technology and Design, 20 Dover Drive, Singapore 138682}
\begin{abstract}
Blind quantum computation allows a client with limited quantum capabilities to interact with a remote quantum computer to perform an arbitrary quantum computation, while keeping the description of that computation hidden from the remote quantum computer. While a number of protocols have been proposed in recent years, little is currently understood about the resources necessary to accomplish the task. Here we present general techniques for upper and lower bounding the quantum communication necessary to perform blind quantum computation, and use these techniques to establish a concrete bounds for common choices of the client's quantum capabilities. Our results show that the UBQC protocol of Broadbent, Fitzsimons and Kashefi \cite{joe1}, comes within a factor of $\frac{8}{3}$ of optimal when the client is restricted to preparing single qubits. However, we describe a generalization of this protocol which requires exponentially less quantum communication when the client has a more sophisticated device.
\end{abstract}
\maketitle
The development of quantum computation promises the ability to solve computational problems which prove intractable for classical computers~\cite{shor1997polynomial, grover1996fast}. Quantum computers would also allow for the simulation of quantum systems that are not possible with present day technology~\cite{lloyd1996universal,kassal2008polynomial}. The study of quantum computation and information has also led to new insights on the fundamental quantum nature of physics \cite{peres2004quantum,lloyd02,braunstein2007quantum,zw10a,oreshkov2012quantum,fitzsimons2013quantum}. Recently, there has been growing interest in the nature of \emph{distributed} quantum computation \cite{buhrman,broadbent,danos}. Beyond the ability to shed new light onto the question of the nature of the (possible) advantage of quantum computation over classical \cite{ekert1998quantum,knill1998power,aaronson2011computational,bremner2011classical}, this area has important practical applications. Due to the difficulty in constructing large scale quantum computers, it is likely that the availability of such technology will be limited, at least at first. Hence, the ability to perform a quantum computation \emph{remotely} is of particular interest. More recently, the question has arisen whether it is possible to perform a quantum computation remotely in a \emph{blind} fashion. In a blind computation Alice gets Bob to perform a quantum computation for her without revealing the nature of the computation, or the input (up to some minimal \emph{leaked} information such as an upper bound on the size of the circuit/input) \cite{childs,arrighi,aharonov,joe1}. This notion mirrors the classical counterpart \cite{Feigen86}, though quantum mechanics appears to allow for encryption of a larger range of problems than is possible classically \cite{joe1}.

One of the least technologically demanding solutions to the problem of blind computation is the Universal Blind Quantum Computation (UBQC) protocol \cite{joe1,joe2}, which has recently been demonstrated experimentally in a quantum optics setting \cite{joe3}. This protocol has been shown to be both correct and secure, both in a stand-alone setting and as a cryptographic primitive \cite{dunjko2013composable}. However, the question arises of whether the protocol is \emph{optimal}. That is, is it possible to achieve correctness and security using \emph{less} resources. In this letter, we address this issue of how much quantum communication is necessary in order to achieve the blind evaluation of some secret unitary operation. To this end, we develop a framework to bound the resources of any possible blind computation. Fixing Alice's quantum capabilities, we define a figure of merit, $\Gamma(N)$, corresponding to the maximum number of quantum gates which can be hidden by any protocol which communicates $N$ qubits. We use a simple counting argument to bound $\Gamma(N)$ from above, and use a generalization of the blind computation protocol presented in \cite{joe1} to lower bound $\Gamma(N)$ by giving an achievable rate. We apply these techniques to obtain bounds in a number of physically realistic settings, including those for which blind quantum computing protocols have previously been proposed, as well as others motivated by the current state of quantum technology.

The paper is structured as follows. We begin by introducing our figure of merit, $\Gamma(N)$, and demonstrating a simple counting technique for bounding from above the rate at which gates can be hidden. We then proceed to introduce a generalization of the UBQC protocol, and consider its correctness and blindness. We use these techniques to examine various limitations on Alice's computational abilities. We use the generalized blind computation protocol and the parameter counting argument to bound $\Gamma(N)$ from below and above, respectively, in each setting. We conclude with a discussion of the universality of the generalized protocol in the settings under consideration.

As the exact relationship between BQP and NP remains unknown, there is in fact no proof that the decision problems answerable with a quantum computer cannot be hidden from a remote server using purely classical means, as is achieved for certain problems in \cite{Feigen86}. However, quantum computation does more than answer decision problems. It manipulates quantum states in a continuous way, and such a computation cannot be completely hidden using purely classical communication \endnote{To see this, consider a protocol which consisted entirely of classical communication. Given enough time, Bob could rerun the protocol many times using the transcript of a single run with Alice, and perform tomography on the output state. This would limit the possible computations which could have been performed by Alice to a discrete set. Hence continuous rotations cannot be completely hidden, and the computation is revealed up to Alice's decoding operations.}. 

\begin{definition} We define a \textit{single parameter gate} to be any gate parameterised by a single real variable which, for varying values of the parameter, maps an input state which is in a fixed maximally entangled state with an ancilla register onto states which collectively lie on a one dimensional curve of finite length in the Hilbert space of the system. Then, for a particular choice of Alice's apparatus, we define $\Gamma(N)$ to be the maximum number of such single parameter gates which can by encoded across $N$ transmitted qubits given the limitations of Alice's device.
\end{definition}
A simple example of such a single parameter gate is a Pauli rotation through an arbitrary angle. Given the above definition, $\Gamma(N)$ can be bounded using a well known result from topology.
\begin{theorem} For a fixed choice of Alice's apparatus, if any $N$ qubit output state which Alice can produce lies of a manifold of real dimension $D$, then $\Gamma(N)\leq D$. \label{thm:upper-bound}
\end{theorem}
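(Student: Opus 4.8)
The plan is to reduce the statement to the topological invariance of dimension, equivalently Brouwer's invariance of domain. Suppose some fixed apparatus and protocol encode $\Gamma(N)$ single parameter gates across the $N$ transmitted qubits, with real parameters $\theta_1,\dots,\theta_{\Gamma(N)}$ ranging over a product of intervals $U\subseteq\mathbb{R}^{\Gamma(N)}$. Each tuple $(\theta_1,\dots,\theta_{\Gamma(N)})$ determines a particular $N$-qubit state that Alice transmits, so the encoding induces a map $\Phi\colon U\to\mathcal{M}$, where $\mathcal{M}$ is the manifold of real dimension $D$ containing all of Alice's possible output states. The goal is to establish that $\Phi$ is a continuous injection, after which the topological result forces $\Gamma(N)\leq D$.

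First I would check continuity of $\Phi$. By the definition of a single parameter gate, varying one parameter moves the associated state along a curve of finite length, hence continuously; since the transmitted state is assembled by composing the action of the individual gates on the register, $\Phi$ is jointly continuous in all $\Gamma(N)$ parameters.

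The harder and more delicate step, where I expect the main obstacle to lie, is injectivity. The argument should run through correctness of the blind protocol: the only quantum information Bob receives is the state $\Phi(\theta_1,\dots,\theta_{\Gamma(N)})$, so if two distinct tuples produced the same transmitted state, Bob would be forced to perform identical processing and return identical outputs for two computations differing in at least one encoded gate. The role of the maximally entangled ancilla in the definition is precisely to certify, via the Choi--Jamio{\l}kowski correspondence, that distinct parameter values implement genuinely distinct channels, i.e.\ distinct points on the one dimensional curve, so that tuples differing in any coordinate correspond to distinct computations and hence, by correctness, to distinct output states. Care is needed to accommodate the accompanying classical communication and Alice's decoding operation, and to rule out degeneracies in which two parameters trade off against one another; the finite-length, one dimensional condition on each gate's curve is what guarantees that each parameter contributes exactly one independent dimension, so that the image of $\Phi$ does not collapse below dimension $\Gamma(N)$.

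Finally, with $\Phi$ established as a continuous injection from the open set $\mathrm{int}(U)\subseteq\mathbb{R}^{\Gamma(N)}$ into the $D$-dimensional manifold $\mathcal{M}$, I would invoke invariance of domain: locally $\mathcal{M}$ embeds in $\mathbb{R}^{D}$, and composing with the inclusion $\mathbb{R}^{D}\hookrightarrow\mathbb{R}^{\Gamma(N)}$ would produce a continuous injection of an open subset of $\mathbb{R}^{\Gamma(N)}$ onto a set with empty interior, which invariance of domain forbids whenever $\Gamma(N)>D$. This contradiction yields $\Gamma(N)\leq D$, as claimed.
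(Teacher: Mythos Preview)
Your approach is essentially the paper's, reduced to the same topological principle (invariance of dimension), but you run the argument in the reverse direction and with more care. The paper first asserts that the computation's \emph{output} states form a $\Gamma(N)$-dimensional manifold, then observes that Bob produces these outputs by a continuous map from the $D$-dimensional manifold of transmitted states, and concludes via the (somewhat loosely phrased) claim that a lower-dimensional manifold cannot be continuously mapped onto a higher-dimensional one. You instead work directly with Alice's encoding $\Phi$ from the $\Gamma(N)$-dimensional parameter space into the $D$-dimensional state manifold, argue continuity and injectivity, and invoke invariance of domain. Your formulation is the sharper one: the paper's direction is vulnerable to a space-filling-curve objection unless one implicitly uses injectivity, which is exactly what your setup makes explicit. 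Your flagged worry about classical communication and Alice's decoding possibly spoiling injectivity of $\Phi$ is real and is glossed over in the paper as well; both arguments tacitly assume the continuous gate parameters must be recoverable from the transmitted quantum state alone.
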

\begin{proof}
For a quantum computation composed of $\Gamma(N)$ single parameter gates, provided that no gates are redundant, the possible output states correspond to points on a manifold of real dimension $\Gamma(N)$, since each such gate increases the real dimension of the manifold by at most one. It is well established in topology that a manifold of finite dimension cannot be continuously mapped into a manifold of lower dimension \cite{topology}. However, the input states received by Bob lie on a manifold of dimension $D$. Hence, since any operation Bob can perform is necessarily continuous due to the linearity of quantum mechanics, $\Gamma(N)\leq D$.
\end{proof}
This theorem implies that for a fixed choice of her quantum capabilities, by bounding $D$ by counting the independent continuous parameters necessary to describe states produced by Alice, it is possible to place an upper bound on $\Gamma(N)$. 

We now turn our attention to establishing a lower bound on $\Gamma(N)$, by presenting a generalization of the Broadbent {\it et al.} protocol. We assume that Alice has the ability to generate input states randomly chosen from some set, $\Phi$, which she can then send (perfectly) to Bob, and that Bob has access to a full quantum computer. 

We will consider only sets $\Phi$ which can be generated in the following way. Take a set of diagonal unitary operators $\mathcal{D}=\{D_k\}_k$ which forms a group under multiplication, and which has the additional property that $\left( X^{c^{1}} \otimes \dots \otimes X^{c^{n}} \right) \mathcal{D} \left( X^{c^{1}} \otimes \dots \otimes X^{c^{n} } \right) = \mathcal{D}$ for all $\{c^j\} \in \{0,1\}^{\otimes n}$. We then define
\[
\Phi = \left\{\ket{\phi_k} : \left( Z^{r^{1}_{k}} \otimes \dots \otimes Z^{r^{n}_{k}} \right) D_k |+\rangle^{\otimes n}, D_k \in \mathcal{D}, r_k^j \in \{0,1\}\right\}.
\]
The generalized UBQC (GUBQC) protocol is then as follows:
\begin{enumerate}
\item Alice chooses an ordered set $\{U_i\}$ of $m$ operators from $\mathcal{D}$ such that $H^{\otimes n} U_{m} H^{\otimes n} U_{m-1} ... H^{\otimes n} U_1 \ket{+}^{\otimes n}$ corresponds to her desired computation. 
\item For every $1\leq i \leq m$ Alice chooses $D_i \in_R \mathcal{D}$ (uniformly at random) and $\{r_i^k : 1 \leq k \leq n\} \in \{0,1\}^{\otimes n}$. She then prepares the $n$-qubit state $|\phi_{i} \rangle = \left( Z^{r^{1}_{i}} \otimes \dots \otimes Z^{r^{n}_{i}} \right) D_{i} |+\rangle^{\otimes n}$, and sends it to Bob. Bob stores this state in the $i^\text{th}$ register of his quantum computer.
\item For all $1 \leq i < m-1$ and $1\leq j\leq n$, Bob interacts the $j^\text{th}$ qubit of register $i$ to the $j^\text{th}$ qubit of register $i+1$, by applying a controlled-$Z$ gate between them. 
\item For each step $1\leq i \leq m$:
\begin{enumerate}
\item Alice sends Bob a classical description of the operator $ C_{i} = D_{i}^{\dag} U_{i}' ,$ where 
\[~~~~~~~~~~~~~~~~~U_{i}' = \left( X^{c^{1}_{i}} \otimes \dots \otimes X^{c^{n}_{i}} \right) U_{i} \left( X^{c^{1}_{i}} \otimes \dots \otimes X^{c^{n}_{i} } \right).\]
Here $c^{j}_{i} = \oplus_{k \in \mathcal{S}_{i-1}} s^{j}_{k}$ and $s^{j}_{k} = b^{j}_{k} \oplus r^{j}_{k}$, $b^{j}_{i}$ is the measurement result on the $j^{th}$ qubit of $i^{th}$ $n$-qubit state. Here $\mathcal{S}_i = \lbrace i, i-2, \ldots \rbrace$ captures the measurement dependency structure between registers. We use the convention that $c^{j}_{1} = 0 , \forall j$ and $s^{j}_{i} = 0~\forall j$ for $i\leq 1$.
\item Bob applies $C_{i}$ on the $i^\text{th}$ register of his system. He then performs a measurement of each qubit in that register in the Hadamard basis (i.e., $\lbrace |+\rangle , |-\rangle \rbrace$). He sends the measurement result $b_{i} = \lbrace b_{i}^{j} \rbrace$ to Alice, with the convention that $|+\rangle$ corresponds to the outcome 0, while $|-\rangle$ corresponds to 1.
\item Alice sets the value of $s^{j}_{i} = b^{j}_{i-1} \oplus r^{j}_{i-1}.$
\end{enumerate}
\item In the case of classical output, Alice takes the ordered set $c_{m+1} = \{c_{m+1}^j: 1\leq j \leq n\}$ as the output of the computation. In the case of quantum output, Bob returns the final register to Alice, without measurement, after applying $C_m$ followed by Hadamard gates on each qubit. Alice then performs the last set of correction operators herself, by applying $\bigotimes_j Z^{c_m^j} X^{c_{m-1}^j + r_m^j}$ to the state she receives.
\end{enumerate}
If both parties follow the protocol then the result corresponds to Alice's desired computation, as shown below. 
\begin{theorem} If Alice and Bob follow the steps as set out in the protocol, then the output received by Alice is  $H^{\otimes n} U_{m} H^{\otimes n} U_{m-1} ... H^{\otimes n} U_1 \ket{+}^{\otimes n}$ in the case of a quantum output, or the result of measuring this state in the computational basis otherwise.
\end{theorem}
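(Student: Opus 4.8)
The plan is to prove correctness by induction on the measurement rounds $i=1,\dots,m$, using the standard byproduct-operator bookkeeping of measurement-based computation. The first step is to simplify a single round. Since $D_i$, $U_i$, $U_i'$ and the entangling controlled-$Z$ gates are all diagonal, they mutually commute, so applying $C_i = D_i^{\dagger} U_i'$ to the prepared register $\ket{\phi_i} = Z^{r_i} D_i \ket{+}^{\otimes n}$ (writing $Z^{r_i}=\bigotimes_j Z^{r_i^j}$) cancels the blinding, leaving register $i$ in the effective state $Z^{r_i} U_i' \ket{+}^{\otimes n}$. I would then observe that the factor $Z^{r_i}$ acts as a one-time pad on the Hadamard-basis measurement, so that recording $b_i$ is equivalent to measuring the un-padded state and recording the decoded outcome $s_i = b_i \oplus r_i$; this is precisely the protocol's definition of $s_i^j$.

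Next I would establish the single-round gate-teleportation identity: for a chain built from transversal controlled-$Z$ gates, applying a diagonal operator to register $i$ and measuring it in the $\{\ket{+},\ket{-}\}$ basis teleports the logical state onto register $i+1$ with a Hadamard layer applied, up to Pauli byproducts determined by $s_i$. Feeding a state carrying byproducts $X^{\alpha_i} Z^{\gamma_i}$ into this identity and choosing $U_i' = X^{c_i} U_i X^{c_i}$, so that $U_i' X^{\alpha_i} = X^{\alpha_i} U_i$ whenever $c_i=\alpha_i$, makes the round implement the intended logical gate $H^{\otimes n} U_i$ exactly. At this point I would invoke the hypothesis $\left(X^{c^1}\otimes\dots\otimes X^{c^n}\right)\mathcal{D}\left(X^{c^1}\otimes\dots\otimes X^{c^n}\right)=\mathcal{D}$ to guarantee $U_i' \in \mathcal{D}$, so that $C_i$ remains a legal operator Alice can describe. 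Commuting the residual Paulis through the Hadamard (which swaps $X \leftrightarrow Z$) yields the update rule $\alpha_{i+1} = s_i \oplus \gamma_i$ and $\gamma_{i+1} = \alpha_i$, with $\alpha_1=\gamma_1=0$. Solving this recursion gives $\alpha_i = \oplus_{k\in\mathcal{S}_{i-1}} s_k = c_i$, reproducing the protocol's bookkeeping and explaining the two-step skip structure of $\mathcal{S}_i$.

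The induction then gives, after measuring registers $1,\dots,m-1$, that register $m$ carries $H^{\otimes n} U_{m-1} \cdots H^{\otimes n} U_1 \ket{+}^{\otimes n}$ up to the Pauli $X^{c_m} Z^{c_{m-1}}$. For the quantum-output case I would apply $C_m$ and the final $H^{\otimes n}$, track the byproducts once more, and check that the residual operator is exactly $\bigotimes_j Z^{c_m^j} X^{c_{m-1}^j + r_m^j}$, so that Alice's stated correction removes it and leaves $H^{\otimes n} U_m \cdots H^{\otimes n} U_1 \ket{+}^{\otimes n}$ up to a global phase. For the classical-output case I would use that measuring the un-Hadamarded state in the $\{\ket{+},\ket{-}\}$ basis equals measuring $H^{\otimes n} U_m \cdots \ket{+}^{\otimes n}$ in the computational basis; comparing the recorded bits against the accumulated $Z$-byproduct $\gamma_m = c_{m-1}$ shows the decoded result is $c_{m+1}^j = s_m^j \oplus c_{m-1}^j = \oplus_{k\in\mathcal{S}_m}s_k^j$, which is the desired outcome.

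I expect the main obstacle to be the byproduct bookkeeping above rather than any single algebraic identity: one must verify that the interleaving of $X$ and $Z$ corrections under the alternating Hadamard layers produces exactly the index sets $\mathcal{S}_i$, and that the $Z$-dependencies, which never appear explicitly in Bob's operations, are correctly absorbed either into the conjugation defining $U_i'$ or into Alice's final correction. Keeping the Pauli frame consistent across all $n$ lines simultaneously, while the diagonal operators $U_i$ couple those lines, is the part that will require the most care.
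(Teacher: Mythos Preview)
Your proposal is correct and follows essentially the same approach as the paper: both arguments use the commutativity of the diagonal operators $D_i$, $U_i'$, and the controlled-$Z$ entanglers to cancel the blinding and reduce to an unblinded one-bit teleportation, then propagate the Pauli byproducts recursively through the Hadamard layers to recover the target state up to Alice's final correction. Your version is simply more explicit about the Pauli-frame recursion $\alpha_{i+1}=s_i\oplus\gamma_i$, $\gamma_{i+1}=\alpha_i$ and its solution $\alpha_i=c_i$, which the paper leaves implicit in the phrase ``applying this equivalence recursively.''
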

\begin{proof} As $D_i$ is a diagonal operator, it commutes with the controlled-$Z$ operators used to entangle Bob's registers. Hence, the net effect of the protocol is identical to the case where $D_i = U_i'$ and $C_i$ is the identity, independent of the specific choices actually made by Alice for $D_i$. Therefore the action of $C_{1}$ in first layer of protocol can be equivalently written as $\left( Z^{r^{1}_{1}} \otimes \dots \otimes Z^{r^{n}_{1}} \right) U_{1} |+\rangle^{\otimes n}$.
By using the one bit teleportation circuit \cite{gottesman}, as shown in Fig \ref{fig:single}, the effect of measurement as if the initial state sent for the second register was prepared in the state $\bigotimes_{j=1}^n \left(X^{b_1^j + r_i^j} Z^{r_{2}^j}\right) U_2 H^{\otimes n} U_1 \ket{+}^{\otimes n}$,
and no previous layers existed.
\begin{figure}
\[
\Qcircuit @C=.5em @R=.2em {
& \lstick{|\phi\rangle} & \ctrl{4} & \qw & \gate{H} & \qw & \meter & \rstick{m} \\
\\
\\
\\
& \lstick{|+\rangle} & \control \qw & \qw & \qw & \qw & \qw & \qw & \qw & \qw & \rstick{X^{m}H|\phi\rangle}
}
\]
\caption{The single qubit teleportation protocol.\label{fig:single}}
\end{figure}
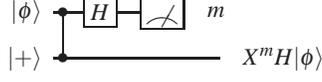

Applying this equivalence recursively we obtain the effective initial state of the final register as $\bigotimes_{j=1}^n \left(X^{c_m^j} Z^{c_{m-1}^j + r_{m}^j}\right) U_m H^{\otimes n} U_{m-1} ... H^{\otimes n} U_1 \ket{+}^{\otimes n}$.
Thus, in the case of classical output, since the $X$ operators commute with the measurement and $c_{m+1}$ incorporates corrections for the $Z$ byproducts, the result of the computation is as desired. Similarly, for quantum output, after the Hadamard gates are applied by Bob, Alice's correction exactly cancels the $X$ and $Z$ by products to result in final state of Alice's register of $H^{\otimes n} U_m H^{\otimes n} U_{m-1} ... H^{\otimes n} U_1 \ket{+}^{\otimes n}$.
\end{proof}
Having shown that the output of the protocol is indeed as expected in the case that Bob follows the protocol, we now turn our attention to the issue of blindness. In proving that our protocol is indeed blind, we base the definition of blindness on that used in \cite{joe1} and \cite{joe2}.
\begin{theorem} The GUBQC protocol is blind while leaking at most $n$ and $m$. 
\end{theorem}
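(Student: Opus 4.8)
Following the blindness definition of \cite{joe1,joe2}, the plan is to show that Bob's entire view --- the quantum states he receives together with the adaptively revealed classical descriptions $C_i$, and whatever he measures --- has a distribution that, after averaging over Alice's private randomness $\{D_i\}$ and $\{r_i^j\}$, depends only on $n$ and $m$ and not on her chosen computation $\{U_i\}$. Since in the protocol all $n$-qubit registers are prepared and transmitted up front (step~2) while only the $C_i$ are sent adaptively (step~4a), I would separate the argument into a claim about the classical transcript and a claim about the quantum registers, exactly as in the Broadbent, Fitzsimons and Kashefi analysis.

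First I would show the classical transcript leaks nothing. For each layer, $C_i = D_i^\dag U_i'$, and the assumed invariance $(X^{c^1}\otimes\cdots\otimes X^{c^n})\,\mathcal{D}\,(X^{c^1}\otimes\cdots\otimes X^{c^n})=\mathcal{D}$ guarantees $U_i' \in \mathcal{D}$. Because $\mathcal{D}$ is a group and $D_i$ is drawn uniformly from it, for every fixed $U_i'$ the element $C_i = D_i^\dag U_i'$ is uniformly distributed over $\mathcal{D}$; hence the messages carry no information about the $U_i$. The natural way to exploit this is a reparametrization: instead of sampling $D_i$ and deriving $C_i$, regard Alice as sampling $C_i\in_R\mathcal{D}$ and setting $D_i = U_i' C_i^\dag$, which is a bijection for each fixed $U_i'$.

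Under this reparametrization the state sent in layer $i$ is $|\phi_i\rangle = (Z^{r_i^1}\otimes\cdots\otimes Z^{r_i^n})\,U_i' C_i^\dag |+\rangle^{\otimes n}$. Averaging over the uniform mask $r_i$ applies the computational-basis dephasing channel $\frac{1}{2^n}\sum_r Z^r\sigma Z^r=\mathrm{diag}(\sigma)$; since $U_i' C_i^\dag$ is diagonal, $U_i' C_i^\dag|+\rangle^{\otimes n}$ has every amplitude of modulus $2^{-n/2}$, so its dephased state is exactly $I/2^n$, independent of both $U_i'$ and $C_i$. Combining the two observations, Bob's view reduces per register to $(I/2^n)$ tensored with a uniformly random $C_i\in\mathcal{D}$, a fixed object determined solely by $n$ and $m$.

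The step I expect to be the main obstacle is making this rigorous in the presence of adaptivity and mask reuse. The correction exponents $c_i^j = \oplus_{k\in\mathcal{S}_{i-1}} s_k^j$ with $s_k^j=b_k^j\oplus r_k^j$ depend on the measurement outcomes $b$ that Bob himself generates and on the very masks $r_k^j$ that also encrypt earlier registers, so $U_i'$ is not a fixed quantity and the reparametrization $D_i\leftrightarrow C_i$ must be justified conditionally on Bob's (possibly dishonest) behaviour. I would resolve this by an induction over layers: conditioning on any fixed history of prior messages and outcomes, the fresh randomness in $r_i$ one-time-pads $s_i^j$ and hence every later exponent, keeping them uniform from Bob's perspective, so the dephasing argument applies layer by layer and composes to give computation-independence of the full view. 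Verifying that the masks shared between a register's encryption and the later corrections introduce no residual $\{U_i\}$-dependence --- by tracking the joint law of $(b,r)$ through the bijection --- is the delicate point, after which blindness with leakage at most $n$ and $m$ follows.
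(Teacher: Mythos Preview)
Your proposal follows essentially the same route as the paper: use the group structure of $\mathcal{D}$ to argue that each $C_i=D_i^\dagger U_i'$ is uniform over $\mathcal{D}$, and use the uniform $Z$-mask $r_i$ to reduce each $|\phi_i\rangle$ to the maximally mixed state independently of $D_i$ and $C_i$. The paper phrases the latter step slightly differently (commuting the diagonal $D_i$ past the $Z^{r_i}$ factors rather than invoking dephasing of an equal-modulus state), but the content is identical.

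Where you go further is in worrying about adaptivity and mask reuse --- the fact that the same $r_k^j$ that encrypts register $k$ also enters $s_k^j$ and hence later $U_{i}'$. The paper does not treat this explicitly: it simply invokes the two-criterion blindness definition of \cite{joe1,joe2} (classical transcript distribution depends only on $n,m$; conditioned on it, the quantum system is fixed) and verifies each criterion marginally, without an inductive argument over layers. Your layer-by-layer conditioning is a reasonable way to make that step rigorous, but note that much of the difficulty dissolves once you observe that all registers are transmitted before any $C_i$ is revealed and that $D_i$ is fresh randomness independent of every $r_k$; thus, conditioned on any fixed history determining $U_i'$, the uniformity of $C_i$ follows directly from the uniformity of $D_i$, and the dephasing of $|\phi_i\rangle$ uses only $r_i$, not the later $s$-variables. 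The paper's brevity is therefore not a gap so much as an appeal to the independence structure, which your induction makes explicit.
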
 
\begin{proof} In order for the GUBQC protocol to be blind while leaking at most it is necessary that two conditions hold: 1) the distribution of classical information obtained by Bob during a run of the protocol must depend only on $n$ and $m$, and 2) given the distribution of above classical information, as well as $n$ and $m$, the state of the quantum system obtained by Bob from Alice is fixed.

The information received by Bob in the protocol are the circuit dimensions $n$ and $m$, $m$ different $n$-qubit quantum states $\ket{\phi_i}$, and classical descriptions of each $C_i$. We first note that for a given $U_i'$ the distribution of $C_i$ is uniformly random over elements of $\mathcal{D}$, since $D_i^\dagger$ is randomly chosen, and $C_i = D_i^\dagger U_i'$. Thus the first criterion is satisfied. Although the quantum states $|\phi_{i} \rangle = \left( Z^{r^{1}_{i}} \otimes \dots \otimes Z^{r^{n}_{i}} \right) D_{i} |+\rangle^{\otimes n}$ appear correlated with $C_i$, this is in fact not the case. As all operators in $\mathcal{D}$ are diagonal, they commute with Pauli $Z$ operators, and so $|\phi_{i} \rangle = D_{i} \left( Z^{r^{1}_{i}} \otimes \dots \otimes Z^{r^{n}_{i}} \right) |+\rangle^{\otimes n}$. As $r_i$ is chosen uniformly at random, we must average over this secret parameter to determine the reduced density matrix for the quantum state Bob receives, which results in the maximally mixed state for any fixed $C_i$ and $U_i'$. Thus the second criterion is also satisfied.
\end{proof}

We now have the tools in place to bound $\Gamma(N)$ for a specific choice of Alice's quantum capabilities. Theorem \ref{thm:upper-bound} allows us to upper bound $\Gamma(N)$ by determining the dimensionality of any manifold containing all possible states generated by Alice's device. The GUBQC protocol, on the other hand, represents a concrete blind computation protocol, and a lower bound on $\Gamma(N)$ can be obtained for a specific setting by identifying a suitable set of states $\mathcal{D}$. We now calculate the bounds for four specific settings corresponding to various limitations on Alice's quantum capabilities.

\textit{1. Restriction to preparing separable qubit states:} The first case we consider is where Alice is restricted to transmitting individual qubits prepared in a separable state. This setting places very little technological requirements on Alice, as she need only be able to prepare and send a single qubit at a time. Similar capabilities have already been widely demonstrated in the context of quantum key distribution \cite{schmitt2007experimental}.
The upper bound on $\Gamma(N)$ is straightforward to calculate in this instance, since all single qubit pure states reside on a two-dimensional surface (the surface of the Bloch sphere). Thus separable states of $N$ qubits lie on a surface of $2N$ dimensions, and hence by Theorem \ref{thm:upper-bound} we have $\Gamma(N) \leq 2N$.

As this corresponds to the setting considered by Broadbent \textit{et al} \cite{joe1}, we can use the UBQC protocol presented in that paper to place a lower bound on $\Gamma(N)$. The explicit gate construction they present encodes up to 3 single parameter gates for every 4 qubits sent from Alice to Bob. However, a general measurement pattern on the same graph obtains a single parameter gate (in the form of a $Z$ rotation followed by a Hadamard gate) for every qubit sent from Alice to Bob, thus lower bounding $\Gamma(N)$ by $N$. Thus the UBQC protocol is within a factor of $\frac{8}{3}$ of optimality, and we have $N \leq \Gamma(N)\leq 2N$.

\textit{2. Restriction to preparing separable $k$-qubit states} We now consider a generalization of the previous setting, where we instead allow Alice to prepare entangled states of $k$ qubits at a time, which are then send to Bob. This corresponds to the situation where Alice can prepare entangled states of a certain size, but cannot store and interact the qubits she produces. Physically, this is motivated by quantum optics, where production and transmission of entangled states can be achievable (for example by parametric down conversion \cite{kwiat1995new}) with significantly less effort than is required to interact photons.

In general the quantum state of $k$ qubits can be written as $|\psi_{N}\rangle = \alpha_{1}|00....0\rangle + \alpha_{2}|00....1\rangle + ..... + \alpha_{2^{k}}|11....1\rangle$, where $\alpha_i$ are complex numbers. Since these coefficients are normalized such that $\sum_i |\alpha_i|^2 = 1$, and global phases can be neglected, such states lie on a surface of dimensionality $2^{k+1} - 2$. Thus, by Theorem \ref{thm:upper-bound}, we have $\Gamma(N) \leq \frac{2N}{k}(2^k - 1)$.
Note that since Alice can prepare any $k$ qubit state, she can necessarily prepare states of the form $D_i \ket{+}^{\otimes k}$, as needed for the GUBQC protocol, where $\mathcal{D}$ (the set from which all $D_i$ are drawn) is taken simply to be the set of tensor products of $\frac{n}{k}$ arbitrary diagonal unitary operators on $k$ qubits, where for simplicity we will take $n$ to be an integer multiple of $k$. Thus each $U_i$ contains $\frac{n}{k}(2^k - 1)$ single parameter gates. As there are $\frac{N}{n}$ such $U_i$ performed, $\Gamma(N)$ is lower bounded by $\frac{N}{k}(2^k - 1)$. In the case where $k=N$ this reduces to $\left(2^{N}-1\right) \leq \Gamma(N) \leq 2\left(2^{N}-1\right)$ meaning that an exponential number of single parameter gates can be hidden. 

\textit{3. Restriction to commuting unitary operators:} We now consider the case where Alice is restricted to applying operators from a commuting set to a fixed input, which we will assume to be the Hadamard transform of one of the common eigenstates of this set of operators, as in \textit{instantaneous quantum computation} \cite{shepherd2008instantaneous}. By using exactly the same choice for $\mathcal{D}$ as in the previous case we obtain a similar lower bound, i.e. $\left(2^{N}-1\right)$. Here, unlike in previous settings, our parameter counting argument yields a matching upper bound, since the set of states producible by Alice's apparatus lie on a manifold of exactly $\left(2^{N}-1\right)$ dimensions. 

However we can generalize this case by assuming that Alice can apply no more than $f(N)$ commuting single parameter gates in a given run of the protocol. This restriction is motivated by the desire to consider settings where Alice is required to perform only computationally efficient operations. Trivially, Theorem \ref{thm:upper-bound} implies that $\Gamma(N) \leq f(N)$. Using the same protocol as for the previous case, as long as $n \geq \log_2 f(N)$ and $\frac{f(N)}{m-1}$ is an integer, it is possible to choose $\mathcal{D}$ so that it encodes exactly $\frac{f(N)}{m-1}$ single parameter gates in each $U_i$, and hence hides $f(N)$ gates. As we then have matching upper and lower bounds, this implies that $\Gamma(N) = f(N)$.

\textit{4. Restriction on quantum memory:} The last case we will consider is where Alice possesses a quantum computer with a finite memory, as in \cite{aharonov}, and can send qubits individually while keeping others in memory and replacing the transmitted qubit. The unitary operators that can be used in this case are restricted to be $k$-sized. In this case, the most general operation Alice can perform is to iteratively perform a unitary across her entire register and then transmit a single qubit to Bob, replacing the sent qubit. When she reaches the last $k$ qubits she can transmit them all at once, since any unitary applied to these qubits between transmissions could have been absorbed into a previous operation.

In this case, it is not viable to directly calculate the exact dimensionality of the lowest dimensional manifold on which all of the states producible by Alice's device lie. Instead we upper bound this quantity by simply counting the number of independent single parameter gates Alice can perform before she has transmitted the last qubit to Bob. A general $k$-qubit unitary operator can be decomposed into exactly $2^{2k} -1$ free parameters. Since Alice applies such an every time she replaces a transmitted qubit the resulting states lie on a manifold of dimension $(N-k+1)(2^{2k} -1)$. However this bound can be improved by noting that a unitary operation performed on the set of $k-1$ common qubits between rounds could be absorbed into the $k$-qubit unitary in either the preceding or subsequent rounds. Eliminating this redundancy reduces the number of parameters by $2^{2(k-1)} - 1$ for a total of $N-k$ rounds. This leads to an improved bound of $\Gamma(N) \leq (N-k)(2^{2k} - 2^{2(k-1)}) + 2^{2k} -1$.

Turning to the lower bound, we consider what would happen if Alice initially prepared qubits in the state $\ket{+}$ and then applied only diagonal operations. In this case it is possible to exactly count the number of independent single parameter gates applied to the initial state. The unitary applied to the initial $k$ qubits has $2^k - 1$ free parameters, while each subsequent unitary must act non-trivial on the replaced qubit in order to be distinct from previous operations, and hence has $2^{k-1}$ free parameters. Thus states produced in this way lie on a manifold of dimension $(2^k - 1) + 2^{k-1}(n-k)$, and so $\frac{N}{n}\left(2^{k-1}(n-k+2) - 1\right) \leq \Gamma(N) \leq (N-k)(2^{2k} - 2^{2(k-1)}) + 2^{2k} -1$.

The four settings considered above are intended to cover the most obvious choices of Alice's apparatus, however we note that the technique used to upper bound $\Gamma(N)$ can readily be applied to any device. The GUBQC protocol is not quite as general, as it requires Alice to produce states from a set with a certain mathematical structure. Nonetheless, we expect that the GUBQC protocol can be adapted to most settings of practical interest, through a suitable choice of $\mathcal{D}$. Although we have not addressed the question of universality for the GUBQC protocol, in the three settings where it is used here, as long as Alice's system is of at least two qubits, in all cases the identity and CZ gates, as well as arbitrary local $Z$ rotations lie in $\mathcal{D}$. As each $U_i$ can be chosen from this set of gates arbitrarily, when the fixed Hadamard gates are taken into account, the set of operations is universal for quantum computation.

Although we have found that existing protocols are close to optimal for the first setting, in the other three settings the GUBQC protocol can hide significantly more quantum gates per qubit communicated than prior protocols, and in some cases requires exponentially less quantum communication. Further, in these cases (cases 2 and 3), the GUBQC protocol is within a factor of two of being optimal.

After the initial preparation of this manuscript, the authors became aware of a recent proposal from Giovanetti, Maccone, Morimae and Rudolph \cite{GMMR} for blind computation in the first setting which claims optimality. We note that their protocol is only optimal when quantum and classical communication are treated equally, and is not optimal from the point of view of quantum communication alone. Indeed, for their scheme, we obtain a value of $\Gamma(N) \propto N / \log n$, compared to $\Gamma(N) = \frac{3}{4}N$ for the UBQC scheme of \cite{joe1}. Nonetheless, we believe their protocol represents an interesting new approach to blind computation.

\textit{Acknowledgements --} JF and CPD acknowledge support from the National Research Foundation and Ministry of Education, Singapore. 
\bibliographystyle{apsrev} 
\bibliography{bqc} 
\end{document}